\newcommand{\T}[0]{{\text{T}}}
\newcommand{\hro}{\hat{\rho}}
\newcommand{\hror}{\hat{\rho}^{(U)}}
\newtheorem{theorem}{Theorem}
\newtheorem{definition}{Definition}
\newtheorem{lemma}{Lemma}
\begin{document}
\definecolor{navy}{RGB}{46,72,102}
\definecolor{pink}{RGB}{219,48,122}
\definecolor{grey}{RGB}{184,184,184}
\definecolor{yellow}{RGB}{255,192,0}
\definecolor{grey1}{RGB}{217,217,217}
\definecolor{grey2}{RGB}{166,166,166}
\definecolor{grey3}{RGB}{89,89,89}
\definecolor{red}{RGB}{255,0,0}

\preprint{APS/123-QED}

\title{Exponential advantage in continuous-variable quantum state learning}
\author{Eugen Coroi}

\author{Changhun Oh}
\email{changhun0218@gmail.com}
\affiliation{Department of Physics, Korea Advanced Institute of Science and Technology, Daejeon 34141, Korea}

\begin{abstract}
We consider the task of learning quantum states in bosonic continuous-variable~(CV) systems.
We present an experimentally feasible protocol that uses entangled measurements and reflected states to efficiently learn, up to sign, the characteristic function of CV quantum states, with sample complexity independent of the number of modes~$n$. 
We prove that any adaptive scheme without entangled measurements requires exponentially many samples in $n$ for this learning task, thereby demonstrating an exponential advantage from entangled measurements.
Remarkably, we also prove that any entanglement-assisted scheme that does not have access to reflected states requires exponentially many samples in $n$.
Together, these results establish a rigorous exponential advantage that jointly relies on entangled measurements and reflected states.
Finally, we relate the sample‑complexity bounds to the input state's classicality, revealing a classicality–complexity tradeoff: entanglement‑free schemes suffice for highly classical states, whereas in the genuinely nonclassical regime the sample cost can be reduced only by providing both entanglement and reflected states.
\end{abstract}

\maketitle

Quantum systems often provide significant advantages in various information processing tasks over their classical counterparts, such as quantum computing~\cite{nielsen2002quantum, shor1994algorithms, lloyd1996universal} and quantum sensing~\cite{giovannetti2006quantum, giovannetti2011advances}.
Due to their fundamental and practical importance, such advantages have been both analyzed theoretically and demonstrated experimentally~\cite{hangleiter2023computational,arute2019quantum, wu2021strong, morvan_phase_2024, zhong2020quantum, zhong2021phase, madsen2022quantum,deng2023gaussian,decross2024computational}.
Beyond computing and sensing, recent works have identified analogous advantages in learning quantum systems~\cite{huang2021information, chen2022exponential, anshu2024survey}.

More specifically, it has been proven that using a controllable coherent quantum memory significantly reduces the sample complexity for learning a quantum system, offering an exponential advantage compared to any scheme without such a memory~\cite{huang2021information, chen2022exponential, aharonov2022quantum, anshu2024survey, king2024exponential, chen2022quantum, chen2023futility, chen2024tight}.
In many cases, subsequent experiments have demonstrated the predicted exponential advantage~\cite{huang2022quantum, seif2024entanglement,liu2025quantum}.
Despite this progress, most results to date concern discrete-variable~(DV) systems.
Extending these studies to CV quantum systems has been hindered by significant technical challenges posed by infinite-dimensional Hilbert spaces, despite the far-reaching applications of CV quantum systems in quantum information processing and the growing interest in learning these systems~\cite{wu2023quantum, gandhari2024precision, becker2024classical, mele2024learning}.

Very recently, an exponential advantage has been successfully extended to CV quantum systems by overcoming this technical barrier~\cite{oh2024entanglement}. 
The protocol in Ref.~\cite{oh2024entanglement} employs entangled probes and measurements with quantum memory to learn CV channels efficiently.
It also shows that any scheme without quantum memory requires exponentially more samples in the number of modes~$n$, thereby establishing an exponential advantage for channel learning in the CV setting.

\begin{figure}[t] 
  \centering
  \includegraphics[width=0.4\textwidth]{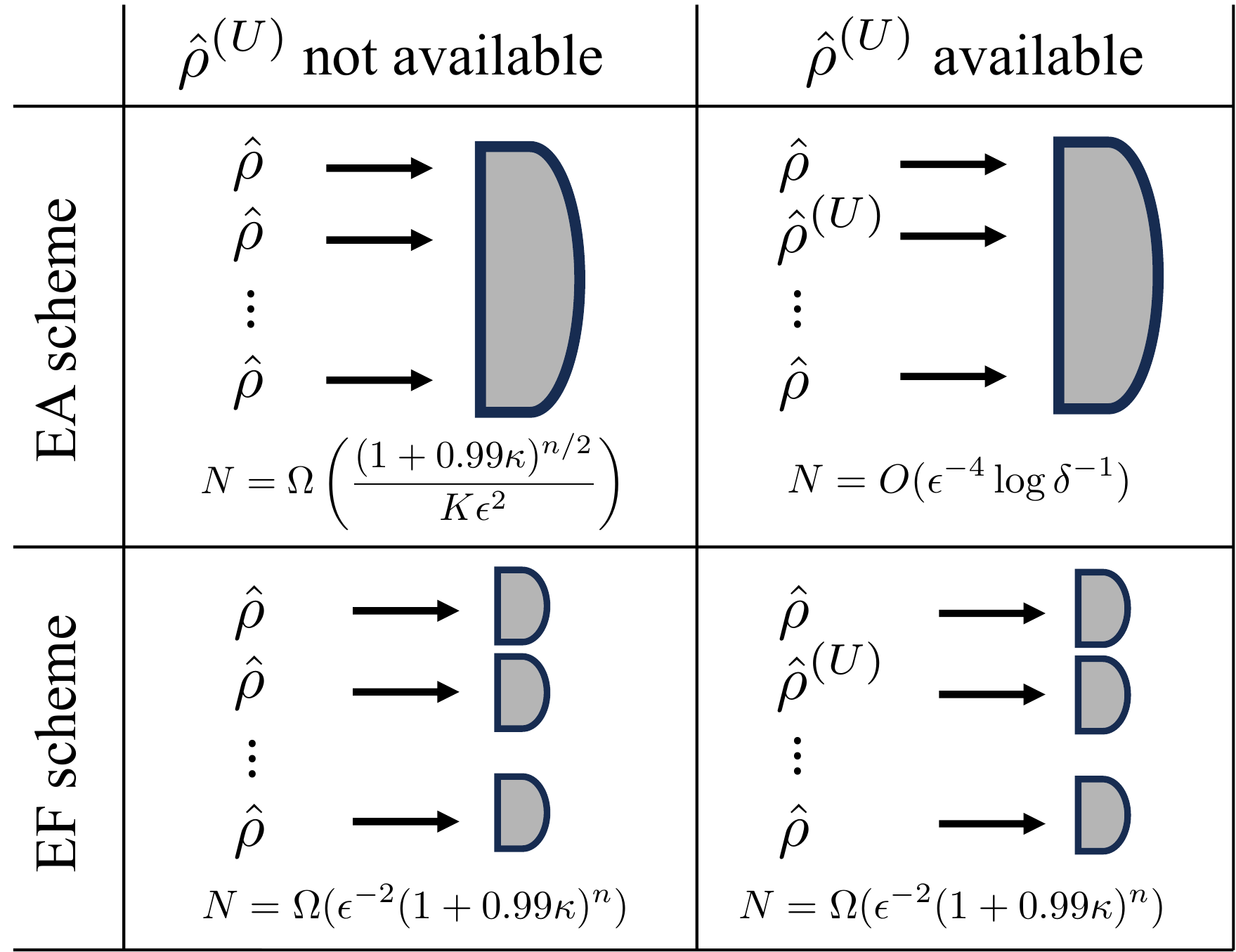} 
  \caption{Sample complexity for the learning task in Def.~\ref{def::learning-task}, categorized by (i) access to reflected states and (ii) access to entangled measurements.}
  \label{fig::quadrants}
\end{figure}

Building on these \emph{channel}‑level results, we prove an exponential advantage for learning CV quantum \emph{states} with quantum memory.
Inspired by recent results~\cite{wu2023quantum, king2024exponential}, we present a Bell‑measurement (BM) scheme that learns the state’s characteristic function (up to a sign) with sample complexity independent of the number of modes~$n$, via joint measurements on the state and its reflected state (defined below).
Notably, we rigorously prove that when entangled measurements are prohibited, any scheme must use exponentially many samples in $n$, even if adaptive strategies and access to reflected states are allowed.
Thus, we establish an exponential separation between entanglement‑assisted~(EA) and entanglement‑free~(EF) schemes for learning CV quantum states~(see Fig.~\ref{fig::quadrants}).

For reflection-symmetric states, the BM scheme on two copies of the state yields an exponential advantage over any EF scheme. 
We further show that when symmetry is absent, access to a reflected state is essential: without it, the task remains exponentially hard even with entangled measurements.
Therefore, our work fully characterizes the sample complexity of learning CV quantum states from the perspective of entangled measurements and access to reflected states.
Lastly, we establish bounds linking state classicality to the sample complexity of the learning task, revealing a tradeoff between classicality and the need for entanglement and reflected states.

 
\textit{Problem Setup.{\textemdash}}
Consider an $n$-mode bosonic CV system, defined by the bosonic annihilation operators $\{\hat{a}_i\}_{i=1}^n$, and the task of learning an $n$-mode CV quantum state~$\hro$.
Any CV quantum state can be expressed in the displacement-operator basis $\{ \hat{D}(\alpha) \equiv \bigotimes_{i=1}^{n} e^{\alpha_i \hat{a}^{\dagger}_i -  \alpha_i^* \hat{a}_i } \}_{\alpha\in \mathbb{C}^{n}}$, as~\cite{cahill1969density, ferraro2005gaussian, serafini2017quantum}
\begin{align}\label{eq:expansion}
    \hro=\frac{1}{\pi^n}\int d^{2n}\alpha\, \chi_{\hro}(\alpha)\,\hat{D}^\dagger(\alpha),
\end{align}
where $\chi_{\hro}(\alpha)\equiv \Tr[\hro\hat{D}(\alpha)]$ is the characteristic function of the quantum state and its Fourier transform yields the Wigner function~\cite{cahill1969density, ferraro2005gaussian, serafini2017quantum}.
Since $\chi_{\hro}$ fully specifies the state and suffices to compute physical observables~\cite{ferraro2005gaussian, serafini2017quantum}, we define learning a CV quantum state $\hro$ as learning its characteristic function.
More specifically, we define the learning task as follows~\cite{huang2021information,huang2022quantum,chen2022quantum,chen2022exponential,chen2024tight,oh2024entanglement}:
\begin{definition}[Learning Task]\label{def::learning-task}
Let $\hro$ be an $n$-mode CV quantum state, and fix $\epsilon,\delta,\kappa>0$. 
A learning scheme using $N$ copies of $\hro$ (and, when allowed, associated resource states) performs possibly joint measurements on these states and records classical outcomes.
Post-measurement, given an arbitrary query $\alpha\in\mathbb{C}^n$ satisfying $|\alpha|^2\leq \kappa n$, the scheme outputs an estimate $\tilde{u}$ such that
\begin{equation}
\Pr\left[\min_{\tau\in\{\pm1\}}|\tau\tilde{u}-\chi_{\hro}(\alpha)|\leq \epsilon\right]\geq 1-\delta.
\end{equation}
\end{definition}
The protocol utilizes the reflected state $\hror$ as a resource, defined via the characteristic function~\cite{wu2023quantum}
\begin{align}\label{eq:char0}
    \chi_{\hror}(\alpha) \equiv \chi_{\hro}(U\alpha^*),
\end{align}
where $U \in \mathbb{C}^{n\times n}$ is a symmetric unitary matrix ($U^{-1}=U^{*}=U^{\dagger}$).
For example, $U=-I$ yields the complex-conjugate state ${\hro^{(-I)}=\hro^*}$ (see SM Sec.~S2~\cite{supple}).
Physically, the map $\alpha \mapsto -\alpha^*$ corresponds to a reflection along the momentum quadrature in phase space; generally, $U$ specifies the axes of reflection.
This reflected state is a key resource enabling the exponential advantage demonstrated below.
We say that $\hro$ has reflection symmetry if there exists a symmetric unitary matrix $U$ such that $\hro=\hror$.
Moreover, for a given $U$, there exists a passive unitary operator $\hat{\mathcal{U}}$ implemented with beam splitters and phase shifters~\cite{reck1994experimental}, such that~\cite{wu2023quantum}
\begin{align}\label{eq:passive}
    \chi_{\hat{\mathcal{U}}\hro\hat{\mathcal{U}}^\dagger}(\alpha) = \chi_{\hro} (U^{\T}\alpha),
\end{align}
which immediately follows from Eq.~\eqref{eq:expansion} and the definition of the displacement operator.

 
\textit{BM Protocol.{\textemdash}}
We first present a concrete protocol, inspired by and closely related to Refs.~\cite{wu2023quantum, king2024exponential}, that solves the learning task (see Fig.~\ref{fig:BM}).
Consider an $n$-mode CV state $\hro$ and suppose we have access to a reflected state $\hro^{(U)}$ of $\hro$, with the corresponding symmetric unitary $U$ known.
To learn $\chi_{\hro}(\alpha)$ (up to sign), we apply a linear-optical circuit $\hat{\mathcal{U}}$ to the reflected state to obtain $\hat{\mathcal{U}}\hro^{(U)}\hat{\mathcal{U}}^\dagger$, and then measure the joint state $\hro\otimes \hat{\mathcal{U}}\hro^{(U)}\hat{\mathcal{U}}^\dagger$ using the CV Bell-measurement (BM), implemented by 50:50 beam splitters followed by homodyne detection of the $\hat{x}$ and $\hat{p}$ quadratures, respectively~\cite{serafini2017quantum}.
Here, the POVM elements of the BM are given by $\{\hat{\Pi}(\zeta)\}_{\zeta\in\mathbb{C}^n}$ and have the following form: $(I\otimes \hat{D}(\zeta))|\Psi\rangle\langle \Psi|(I\otimes \hat{D}^\dagger(\zeta))/\pi^n$, where $|\Psi\rangle$ denotes the tensor product of $n$ infinitely squeezed TMSV states, each proportional to $\sum_{k=0}^{\infty}\ket{k}\ket{k}$ when expressed in the Fock basis \footnote{The Bell-basis description is a formal representation of the POVM. 
The measurement is physically implemented using linear optics and homodyne detection 
and does not require preparing infinitely squeezed states.}.

\begin{figure}[t] 
  \centering
  \includegraphics[width=0.35\textwidth]{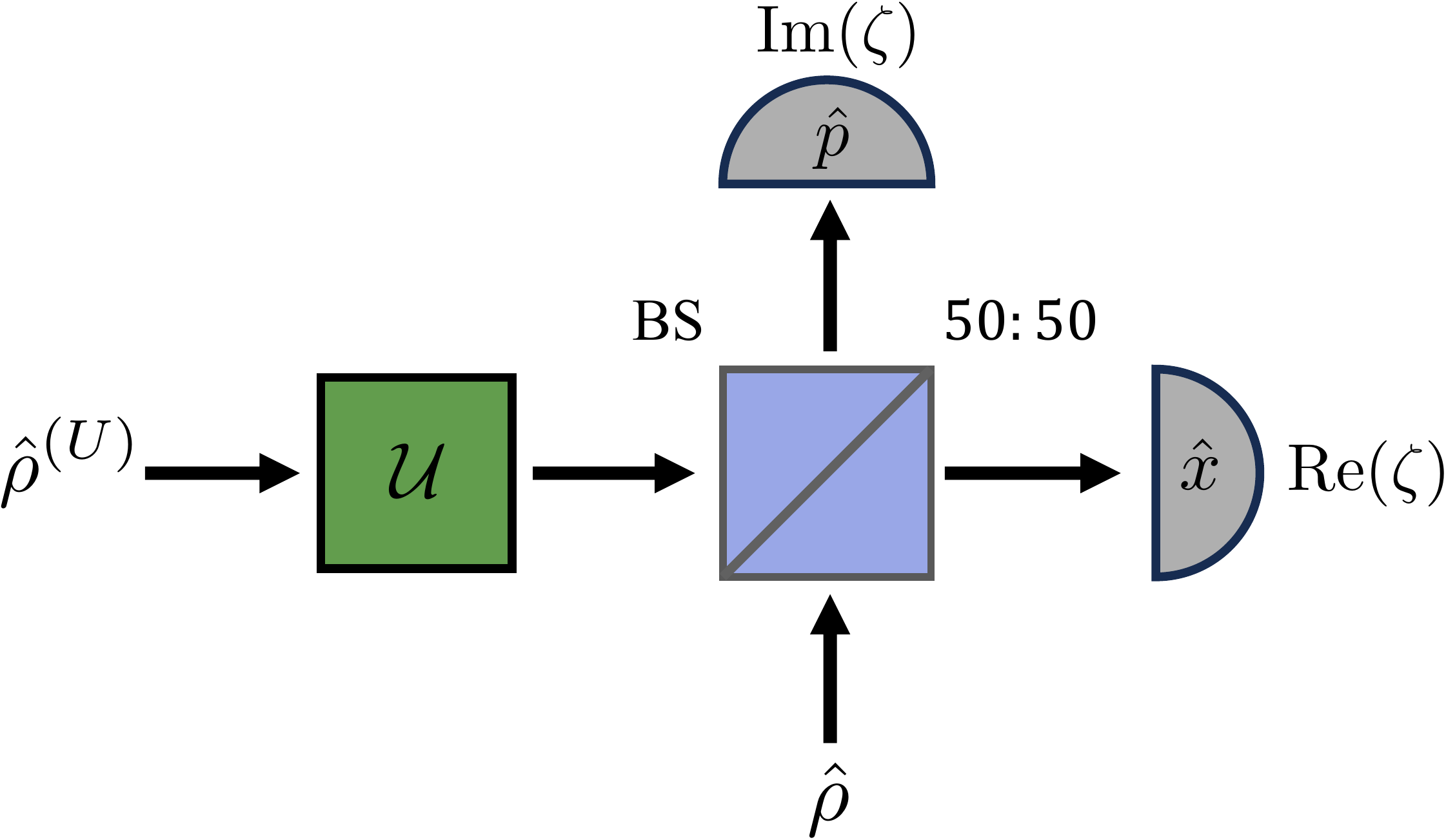} 
  \caption{BM protocol. See the main text for the details.}
  \label{fig:BM}
\end{figure}

Eqs.~\eqref{eq:char0} and \eqref{eq:passive} together with the POVM above show that the squared characteristic function of the state $\chi^2_{\hro}(\alpha)$ is expressed in terms of the BM outcome distribution $p(\zeta)$ (see SM Sec.~S3~\cite{supple}):
\begin{align}
  \chi_{\hat\rho}^2(\alpha)
  = \int d^{2n}\zeta\, p(\zeta)\, e^{-(\zeta\cdot\alpha - \zeta^*\cdot\alpha^*)}.
\end{align}
Averaging $e^{-(\zeta^{(i)}\cdot \alpha-(\zeta^{(i)})^* \cdot \alpha^*)}$ over $N$ samples $\{\zeta^{(i)}\}_{i=1}^{N}$ yields an unbiased estimator of $\chi_{\hat{\rho}}^{2}(\alpha)$. 
Notably, the measurement data is independent of the query $\alpha$, allowing the same samples to be reused for multiple queries.
Applying Hoeffding's inequality and the union bound yields:
\begin{lemma}
    There exists an EA reflected state-assisted protocol 
    that learns $\chi^2_{\hro}(\alpha)$ within $\epsilon$-accuracy and $1-\delta$ success probability, for any $\alpha \in \mathbb{C}^{n}$, given 
    $N=O(\epsilon^{-2}\log (1/\delta))$ copies of $\hro$ and $\hror$.
\end{lemma}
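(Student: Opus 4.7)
\emph{Proof plan.} The plan is to turn the Fourier‑inversion identity $\chi^2_{\hro}(\alpha)=\int d^{2n}\zeta\,p(\zeta)\,e^{-(\zeta\cdot\alpha-\zeta^*\cdot\alpha^*)}$ into an empirical mean and to apply a standard concentration bound. First I would observe that $\zeta\cdot\alpha-\zeta^*\cdot\alpha^*=2i\,\mathrm{Im}(\zeta\cdot\alpha)$ is purely imaginary, so the integrand $e^{-(\zeta\cdot\alpha-\zeta^*\cdot\alpha^*)}$ has modulus one. Therefore, if $\zeta^{(1)},\dots,\zeta^{(N)}$ are i.i.d.\ draws from the BM outcome distribution $p(\zeta)$, the per‑shot quantities $X^{(i)}\equiv e^{-(\zeta^{(i)}\cdot\alpha-(\zeta^{(i)})^*\cdot\alpha^*)}$ are bounded complex random variables with $|X^{(i)}|\le 1$, and by the Fourier identity their expectation equals $\chi^2_{\hro}(\alpha)$. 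Hence $\tilde u:=\tfrac{1}{N}\sum_{i=1}^N X^{(i)}$ is an unbiased estimator, produced from $N$ independent runs of the BM protocol on one copy each of $\hro$ and $\hror$.

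Next I would bound the complex deviation $|\tilde u-\chi^2_{\hro}(\alpha)|$ by splitting $X^{(i)}$ into real and imaginary parts. Since both $\mathrm{Re}(X^{(i)})$ and $\mathrm{Im}(X^{(i)})$ lie in $[-1,1]$, Hoeffding's inequality applies to each coordinate‑wise empirical mean. Requiring $|\mathrm{Re}(\tilde u-\chi^2_{\hro}(\alpha))|\le\epsilon/\sqrt{2}$ and $|\mathrm{Im}(\tilde u-\chi^2_{\hro}(\alpha))|\le\epsilon/\sqrt{2}$, and combining via a two‑term union bound, yields a failure probability at most $4\exp(-N\epsilon^2/4)$. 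Setting this $\le\delta$ gives $N=O(\epsilon^{-2}\log(1/\delta))$, independent of the mode number $n$ and of $\alpha$. Because the BM outcomes are generated without any knowledge of $\alpha$, the same sample set can be reused for arbitrary queries $\alpha\in\mathbb{C}^n$; the bound above is pointwise in $\alpha$, matching the statement of the lemma.

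There is no real obstacle in this argument: the nontrivial work---deriving the Fourier‑inversion identity relating $p(\zeta)$ to $\chi^2_{\hro}(\alpha)$---has already been carried out in SM~Sec.~S3, and once the unit‑modulus boundedness of the estimator is noticed, the concentration step is entirely routine. The only mild subtlety worth flagging is the factor‑of‑two union bound arising from separating real and imaginary parts, which is absorbed into the $O(\cdot)$ constant and does not affect the stated scaling.
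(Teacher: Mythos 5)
Your proposal is correct and follows essentially the same route as the paper: form the unbiased empirical-mean estimator $\tfrac{1}{N}\sum_i e^{-(\zeta^{(i)}\cdot\alpha-(\zeta^{(i)})^*\cdot\alpha^*)}$ from the BM outcomes, use the unit-modulus boundedness of each summand, and apply Hoeffding's inequality with a union bound over the real and imaginary parts. The observation that the samples are query-independent and reusable is also exactly the point the paper makes.
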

Increasing the sample complexity to~$O(\epsilon^{-4})$ enables the estimation of the characteristic function itself (up to sign).
Hence, the learning task can be solved with a complexity independent of the system size~$n$:
\begin{theorem}[BM]\label{thm:BM}
    There exists an EA reflected state-assisted protocol that accomplishes the learning task given $N=O(\epsilon^{-4}\log (1/\delta))$ copies of $\hro$ and $\hror$.
\end{theorem}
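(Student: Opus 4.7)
The plan is to reduce Theorem~\ref{thm:BM} to the preceding Lemma by observing that $w\mapsto\sqrt{w}$ inverts the squaring map up to a global sign, so that any accurate estimator of $\chi_{\hro}^{2}(\alpha)$ immediately yields an estimator of $\chi_{\hro}(\alpha)$ modulo $\pm 1$. Concretely, I would invoke the Lemma with the tightened accuracy parameter $\epsilon'=\epsilon^{2}$ (keeping the same failure probability $\delta$), obtaining an estimator $\hat w$ that satisfies $|\hat w-\chi_{\hro}^{2}(\alpha)|\le\epsilon^{2}$ with probability at least $1-\delta$ while costing $N=O((\epsilon^{2})^{-2}\log(1/\delta))=O(\epsilon^{-4}\log(1/\delta))$ samples of $\hro$ and $\hror$. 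The protocol then outputs any complex square root $\tilde u$ of $\hat w$.

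The remaining substance of the argument is a short algebraic observation. For any complex numbers $a,b$, combining the factorization $(a-b)(a+b)=a^{2}-b^{2}$ with the elementary inequality $\min(|a-b|,|a+b|)^{2}\le|a-b|\cdot|a+b|$ gives
\begin{equation}
    \min_{\tau\in\{\pm 1\}}|a-\tau b|^{2}\le|a^{2}-b^{2}|.
\end{equation}
Specializing to $a=\tilde u$ and $b=\chi_{\hro}(\alpha)$ yields $\min_{\tau\in\{\pm 1\}}|\tilde u-\tau\chi_{\hro}(\alpha)|\le\sqrt{|\hat w-\chi_{\hro}^{2}(\alpha)|}\le\epsilon$, which is exactly the success criterion in Def.~\ref{def::learning-task}. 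This bound also sidesteps the non-Lipschitz behavior of $\sqrt{\cdot}$ at the origin that would otherwise be the natural worry: the multiplicative factorization absorbs the singularity for free, with no case analysis needed on whether $|\chi_{\hro}(\alpha)|$ is small or large.

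Assembling these ingredients, the scheme reuses the same BM outcomes $\{\zeta^{(i)}\}_{i=1}^{N}$ produced by the Lemma's protocol (which are themselves $\alpha$-independent). For each new query $\alpha$, classical post-processing evaluates the Lemma's unbiased estimator to obtain $\hat w$ and returns either of its square roots as $\tilde u$. The sign ambiguity in the theorem statement is precisely the choice of branch of $\sqrt{\cdot}$, which cannot be resolved from $\chi_{\hro}^{2}(\alpha)$ alone and therefore cannot be eliminated by this protocol. I do not foresee a substantial obstacle: the theorem follows as a one-line algebraic corollary of the Lemma, and the $\epsilon^{-4}$ scaling directly reflects the quadratic inflation of accuracy required when inverting the squaring map.
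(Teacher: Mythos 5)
Your proposal is correct and follows essentially the same route as the paper: invoke the Lemma with accuracy $\epsilon^{2}$ to estimate $\chi_{\hro}^{2}(\alpha)$ at cost $O(\epsilon^{-4}\log(1/\delta))$, output a square root, and control the sign-ambiguous error via $\min_{\tau\in\{\pm1\}}|\tilde u-\tau\chi_{\hro}(\alpha)|^{2}\le|\tilde u^{2}-\chi_{\hro}^{2}(\alpha)|$. The factorization argument you give is exactly the standard way this quadratic accuracy inflation is justified, so no gap remains.
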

Thus, the BM protocol solves the task using only two-copy joint measurements with $n$-independent sample complexity.
This is optimal in $n$ and near-optimal in $\epsilon$ due to the $N=\Omega(\epsilon^{-2})$ lower bound (see SM Sec.~S11~\cite{supple}).
Crucially, we show that this $n$-independence relies on having access to both entangled measurements and reflected states: removing either resource implies exponential sample complexity in $n$.
Finally, in SM Sec.~S4~\cite{supple}, we analyze the relationship between our BM state-learning protocol and the random-displacement channel-learning framework, clarifying both the overlap and the fundamental distinctions between the two tasks.

\textit{Without entangled measurements.{\textemdash}}
To establish the advantage of the BM protocol, we prove that prohibiting entangled measurements forces an exponential sample complexity for the learning task, 
even for fully adaptive strategies that update measurement bases based on prior outcomes.
Critically, we allow the learner access to the reflected state $\hat{\rho}^{(U)}$, ensuring that the separation arises solely from the inability to perform entangled measurements.


\begin{theorem}[Single-copy lower bound]\label{thm::scaling-single-copy}
Let $\hro$ be an $n$-mode quantum state ($n\geq 8$) and let $\hror$ denote a reflected state of $\hro$. 
Fix $\kappa>0$ and $\epsilon\in(0,0.245)$. 
Any EF scheme succeeding in the learning task with $2/3$ success probability must use
\begin{equation}
N=\Omega(\epsilon^{-2}\,(1+0.99\kappa)^{n}).
\end{equation}
\end{theorem}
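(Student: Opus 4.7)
The plan is to reduce Def.~\ref{def::learning-task} to a two-hypothesis distinguishing problem and then apply the adaptive single-copy Le~Cam method, paralleling the chain-rule machinery developed for CV channel-learning lower bounds in Ref.~\cite{oh2024entanglement}.

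First, I would construct a pair of candidate states $\hat\rho_+,\hat\rho_-$ together with a test query $\alpha^\star\in\mathbb{C}^n$ with $|\alpha^\star|^2\le \kappa n$ satisfying: (i) $|\chi_{\hat\rho_+}(\alpha^\star)-\tau\,\chi_{\hat\rho_-}(\alpha^\star)|>2\epsilon$ for both $\tau\in\{\pm1\}$, so that any $\epsilon$-accurate sign-agnostic estimate of $\chi_{\hat\rho}(\alpha^\star)$ identifies the correct hypothesis with probability $\ge 2/3$; and (ii) both states are reflection-symmetric for the fixed unitary $U$, i.e.\ $\hat\rho_\pm^{(U)}=\hat\rho_\pm$. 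Property~(ii) is what kills the reflected-state resource: every copy of $\hat\rho_\pm^{(U)}$ can be simulated from a copy of $\hat\rho_\pm$, so the learner effectively operates on $N$ copies of $\hat\rho_\pm$ alone. A natural concrete family is a symmetrized convex mixture of displaced thermal-like states that concentrates its ``signature'' near $\alpha^\star$ while keeping the characteristic functions nearly coincident elsewhere.

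Next, I would invoke the adaptive Le~Cam / Pinsker chain rule, which for any single-copy POVM sequence used by an adaptive EF learner bounds
\begin{equation*}
\bigl\|P_+^{N}-P_-^{N}\bigr\|_{\mathrm{TV}}^{2}\;\lesssim\;N\cdot\sup_{\hat M}\bigl\|P_+-P_-\bigr\|_{\mathrm{TV}}^{2},
\end{equation*}
where $P_\pm^N$ are the $N$-shot outcome distributions and $P_\pm$ the single-copy ones. Converting single-copy POVM distinguishability into a phase-space overlap via Parseval in the displacement basis gives
\begin{equation*}
\sup_{\hat M}\bigl\|P_+-P_-\bigr\|_{\mathrm{TV}}^{2}\;\lesssim\;\frac{1}{\pi^{n}}\int d^{2n}\alpha\,\bigl|\chi_{\hat\rho_+}(\alpha)-\chi_{\hat\rho_-}(\alpha)\bigr|^{2}.
\end{equation*}
Combined with the Le~Cam requirement $\|P_+^{N}-P_-^{N}\|_{\mathrm{TV}}\ge 1/3$, the theorem reduces to showing the above phase-space integral is $O(\epsilon^{2}(1+0.99\kappa)^{-n})$.

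The main technical obstacle is this final Gaussian overlap estimate. The ensemble in Step~1 must simultaneously (a)~be reflection-symmetric so that access to $\hror$ provides nothing, (b)~maintain a sign-disambiguating characteristic-function gap of $\Omega(\epsilon)$ at some $|\alpha^\star|^2\le \kappa n$, and (c)~produce an overlap integral decaying at the sharp exponential rate $(1+0.99\kappa)^{-n}$. Matching the width of the thermal envelope to the query scale $\kappa$ is what pins the numerical constant $0.99$, and any slack in this matching would immediately weaken the exponent, so the delicate work is tuning the thermal parameter (and verifying $n\ge 8$ is sufficient for the integral estimate to take the stated clean form). Once that integral is bounded, the $\epsilon^{-2}$ dependence follows from standard Le~Cam arithmetic, and the adaptivity of the protocol is absorbed by the chain-rule argument as in Ref.~\cite{oh2024entanglement}.
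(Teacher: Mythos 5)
There is a fatal structural gap in Step~1: a reduction to \emph{two fixed} hypotheses $\hat\rho_+,\hat\rho_-$ can never produce an $n$-dependent lower bound. If $|\chi_{\hat\rho_+}(\alpha^\star)-\chi_{\hat\rho_-}(\alpha^\star)|>2\epsilon$, then since $\hat{D}(\alpha^\star)$ is unitary, $2\epsilon<|\Tr[(\hat\rho_+-\hat\rho_-)\hat{D}(\alpha^\star)]|\le\|\hat\rho_+-\hat\rho_-\|_1$, so the Helstrom measurement---a two-outcome POVM on a \emph{single} copy, hence entanglement-free---already achieves per-copy total variation distance at least $\epsilon$, and repeating it $O(\epsilon^{-2})$ times with majority vote distinguishes the hypotheses. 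Your own requirement~(i) therefore forces $\sup_{\hat M}\|P_+-P_-\|_{\mathrm{TV}}\ge\epsilon$, and the best bound your Le~Cam arithmetic can return is $N=\Omega(\epsilon^{-2})$ with no exponential factor; no tuning of the thermal envelope in Step~3 can rescue this. The paper avoids this by a many-vs-one (``revealed'') hypothesis test: the alternative is an \emph{ensemble} of three-peak states $\hat\rho_{s\gamma}$ with a hidden random sign $s$ and a hidden displacement $\gamma$ drawn from a Gaussian of per-mode variance $0.99\kappa/2$, revealed to the learner only after all measurements. Averaging over $s$ kills the first-order term (yielding the $\epsilon^{-2}$), and averaging over $\gamma$ is precisely what makes the per-copy TVD exponentially small and produces the base $(1+0.99\kappa)^n$; the learner's accuracy guarantee at the a posteriori queries $\pm\gamma$ is what converts learning success into distinguishing success.

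Two further points. Your Parseval step, $\sup_{\hat M}\|P_+-P_-\|_{\mathrm{TV}}^2\lesssim\pi^{-n}\int|\chi_{\hat\rho_+}-\chi_{\hat\rho_-}|^2$, bounds a trace-norm quantity by a Hilbert--Schmidt-norm quantity and is not valid for general single-copy POVMs in infinite dimensions; the paper instead bounds the TVD of the ensemble-averaged outcome distributions directly. And your choice to make the hard instances reflection-symmetric would (even if the rest worked) prove only the weaker Theorem~\ref{thm::scaling-rho-equal-reflected}-type statement: the paper's Theorem~\ref{thm::scaling-single-copy} uses non-symmetric three-peak states and lets the learner request arbitrary mixtures of $\hro$ and $\hror$ copies, which is how it retains the stronger constant $0.99$ rather than $0.66$.
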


\begin{proof}[Proof Sketch]
    (See SM Sec.~S6~\cite{supple} for the full proof).
    We demonstrate the lower bound by restricting the learning task to a specific family of states, the $n$-mode ``three-peak states", $\hat{\rho}_{s\gamma}$, defined by the characteristic function:
\begin{align}
        &\chi_{\hro_{s\gamma}}(\alpha) \nonumber\\ 
        &=e^{-\frac{|\alpha|^2}{2\Sigma^2}}\left[e^{-\frac{|\alpha|^2}{2\sigma^2}}+2is\epsilon_0 e^{-\frac{|\gamma|^2}{2\Sigma^2}}\left(e^{-\frac{|\gamma-\alpha|^2}{2\sigma^2}}-e^{-\frac{|\gamma+\alpha|^2}{2\sigma^2}}\right)\right],
\end{align}
    where $\gamma \in \mathbb{C}^{n}$, $\epsilon_0\leq 1/4$, $s\in \{\pm1\}$, and the variances $\sigma^2\equiv (1/\nu-\nu)/2$ and $\Sigma^2\equiv (1+\nu)/(1-\nu)$ are set by a constant $\nu \in (0,1)$ determined by $\kappa$ ($\nu$ is typically close to 1).
    Its operator representation is provided in SM Sec.~S5~\cite{supple}.
    Figure~\ref{fig::state-display} illustrates the characteristic and Wigner functions of a single-mode three-peak state.
    We formulate a hypothesis-testing game between Alice and Bob, in which Bob succeeds with high probability if he learns a characteristic function well using $N$ copies of $\hro$ and $\hro^{(U)}$.
    
    More specifically, Alice samples $s$ uniformly and $\gamma\in \mathbb{C}^n$ from a multivariate normal distribution with variance $\sigma_\gamma^2 = 0.99\kappa/2$. 
    With equal probability, she then prepares $N$ copies of one of the two hypotheses 1)~$\hro_0$ or $\hro_0^{(U)}$, or 2)~$\hro_{s\gamma}$ or $\hro_{s\gamma}^{(U)}$. 
    Bob may request any configuration of the original and reflected states for Alice's chosen hypothesis (e.g., $\hro,\hror,\hror,\dots,\hro$).
    Bob learns the provided states using his EF scheme. Once measurements are complete, Alice reveals $\gamma$.
    If Bob's scheme accurately estimates $\chi_{\hat{\rho}}(\pm\gamma)$, he can distinguish the hypotheses, as the characteristic functions differ significantly at $\alpha=\pm\gamma$.
    Successful discrimination indicates that Bob's learning scheme produces a sufficiently large total variation distance~(TVD) between the output probability distributions of the two hypotheses.
    
    Meanwhile, we directly compute an upper bound on the TVD for any EF scheme and show that the per-copy increase in TVD is exponentially small. 
    Hence, the required number of copies $N$ is exponentially large. 
\end{proof}

\begin{figure}[t] 
  \centering
  \includegraphics[width=0.37\textwidth]{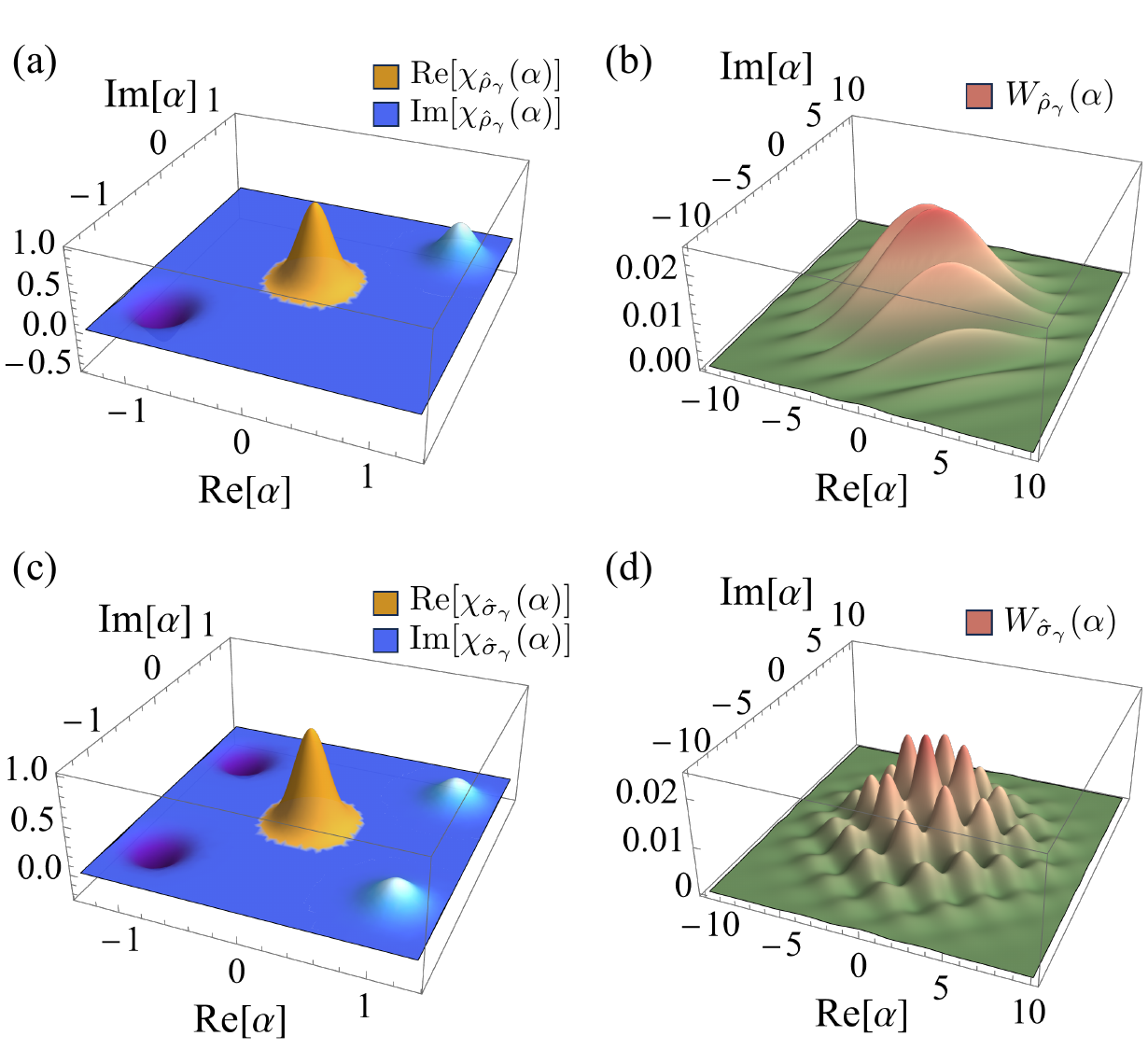} 
  \caption{Characteristic~(left) and Wigner functions~(right) of three-peak~(top) and five-peak~(bottom) states with ${\gamma_R = 1}, {\gamma_I = 0.8}, {\nu =0.98}, {\epsilon_0 = 0.99/4}$. For the five-peak state, $U=-I$ (See SM Sec.~S5~\cite{supple}).} 
  \label{fig::state-display}
\end{figure}
Although our proof focuses on the specific three-peak family, any valid learning scheme must succeed on all $n$-mode input states. Thus, exhibiting even a single subfamily on which the sample complexity is exponential already yields a general exponential lower bound.
Theorems~\ref{thm:BM} and~\ref{thm::scaling-single-copy} collectively establish an exponential separation between EF and EA schemes for learning CV states.
Since the BM scheme solves the learning task for arbitrary $\alpha\in\mathbb{C}^n$, this gap can be made arbitrarily large by increasing the query range $\kappa$.

For the exponential gap above, we assumed access to a reflected state $\hror$ even when entangled measurements were prohibited. This assumption might seem artificial; however, it has been suggested that such scenarios arise in practice~\cite{king2024exponential}. 
We now show the same exponential advantage without any additional resource states by considering inputs with intrinsic reflection symmetry, $\hro=\hro^{(U)}$ for some symmetric unitary $U$~\cite{wu2023quantum}. 
In this case, Thm.~\ref{thm:BM} is unchanged, whereas any EF scheme is again lower bounded by an exponential cost:

\begin{theorem}[Reflection-symmetric inputs]\label{thm::scaling-rho-equal-reflected}
Let $\hro$ be an $n$-mode quantum state ($n\geq 8$) with reflection symmetry $\hro=\hro^{(U)}$. 
Fix $\kappa>0$ and $\epsilon\in(0,0.1225)$. 
Any EF learning scheme succeeding in the learning task with $2/3$ success probability must use
\begin{equation}
    N=\Omega(\epsilon^{-2}\,(1+0.66\kappa)^{n}).
\end{equation}
\end{theorem}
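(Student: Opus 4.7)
The plan is to adapt the hypothesis-testing argument of Thm.~\ref{thm::scaling-single-copy} to a subfamily of reflection-symmetric inputs. First I would symmetrize the three-peak family under the reflection $\alpha\mapsto U\alpha^*$ to obtain the ``five-peak states'' $\hro_{s\gamma}$, whose characteristic function carries a central Gaussian plus four off-center peaks at $\pm\gamma$ and $\pm U\gamma^*$. By construction $\chi_{\hro_{s\gamma}}(U\alpha^*)=\chi_{\hro_{s\gamma}}(\alpha)$, so $\hro_{s\gamma}=\hro_{s\gamma}^{(U)}$ (see Fig.~\ref{fig::state-display}, SM Sec.~S5~\cite{supple}). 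A positivity check on the enlarged Gram matrix of coherent-state-like vectors — the analog of the one used for the three-peak state — confirms that $\hro_{s\gamma}$ is a bona fide density operator, and is what forces the tighter admissible accuracy window $\epsilon<0.1225$.

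The game then proceeds exactly as in Thm.~\ref{thm::scaling-single-copy}, except that Alice now sends $N$ identical copies of either $\hro_0$ or $\hro_{s\gamma}$; no separate reflected-state resource is needed because both hypotheses coincide with their reflections. Alice samples $\gamma\in\mathbb{C}^n$ from a zero-mean Gaussian with the reduced variance $\sigma_\gamma^2=0.66\kappa/2$, chosen so that \emph{both} $\pm\gamma$ and $\pm U\gamma^*$ remain inside the query ball $|\alpha|^2\leq\kappa n$ with high probability. As before, $\epsilon$-accurate learning at the query $\alpha=\pm\gamma$ distinguishes the two hypotheses, so any successful EF learner forces the $N$-copy output distributions to have $\Omega(1)$ total variation distance.

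The core computation is the per-copy TVD upper bound. Following the template of Thm.~\ref{thm::scaling-single-copy}, I would first average over $s\in\{\pm 1\}$ to kill the linear-in-$\epsilon_0$ contributions, and then evaluate the Gaussian integral in $\gamma$ of the quadratic-in-$\epsilon_0$ cross-terms, which now include overlaps not only between the $\pm\gamma$ peaks but also between $\pm\gamma$ and $\pm U\gamma^*$. Computing this integral mode by mode produces an exponential suppression factor; the additional cross-terms modify the Gaussian variance arithmetic so that the base becomes $1+0.66\kappa$ instead of $1+0.99\kappa$. A Le Cam two-point argument then converts this bound into $N=\Omega(\epsilon^{-2}(1+0.66\kappa)^n)$.

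I expect the main obstacle to be controlling the interference between the $\pm\gamma$ and $\pm U\gamma^*$ peaks once averaged over $\gamma$. A priori these cross-terms $\langle \hat{D}^\dagger(\gamma)\hat{D}(U\gamma^*)\rangle$ could partially cancel the separation or introduce $U$-dependent corrections that spoil the clean exponential scaling; one must verify that, for every symmetric unitary $U$, they collapse into the uniform base $1+0.66\kappa$. A secondary obstacle is the enlarged positivity check: with twice as many off-center peaks, the operator-norm estimate of the ``signal'' part is correspondingly more delicate and is what pins down the $\epsilon<0.1225$ window.
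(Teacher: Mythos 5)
Your proposal follows the same route as the paper: its proof of Thm.~\ref{thm::scaling-rho-equal-reflected} simply replaces the three-peak family by the reflection-symmetric five-peak family (peaks at $0,\pm\gamma,\pm U^{\T}\gamma^*$) and reruns the revealed-hypothesis-testing/TVD argument of Thm.~\ref{thm::scaling-single-copy}, with the additional cross-terms and the tighter positivity constraint accounting for the weaker base and the smaller admissible $\epsilon$ window, exactly as you anticipate. One small correction: since $U$ is unitary, $|U^{\T}\gamma^*|=|\gamma|$, so keeping $\pm U^{\T}\gamma^*$ inside the query ball imposes no constraint beyond the one already on $\pm\gamma$ --- the reduced variance is forced by the $\gamma$-averaged cross-overlaps in the TVD computation, not by the query-range condition.
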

The proof is similar to that of Thm.~\ref{thm::scaling-single-copy} except that instead of three-peak states, we introduce a family of reflection-symmetric five-peak states~(see SM Sec.~S7~\cite{supple}).
The characteristic and Wigner functions of a single-mode five-peak state are depicted in Fig.~\ref{fig::state-display}. As pointed out in Ref.~\cite{wu2023quantum},  reflection-symmetric states arise frequently in quantum optics, such as Fock states, a subset of Gaussian states~\cite{ferraro2005gaussian}, cat states~\cite{yurke1986generating}, and GKP states~\cite{gottesman2001encoding}.
Hence, the assumption for proving the exponential advantage is not stringent and the exponential advantage may be of practical importance.

\textit{Without reflected states.{\textemdash}}
In general, $\hro$ need not have reflection symmetry, while the BM protocol requires access to a reflected state $\hror$. 
We ask whether the exponential gap persists when entangled measurements across $K$ copies are allowed, but no reflected state is available. 
Notably, we prove that $\hror$ is indeed crucial: even with joint measurements across $K$ copies of the input state, $\hro^{\otimes K}$, the sample complexity remains exponential in $n$.

\begin{theorem}[No reflected state lower bound]\label{thm:no-reflected}
Let $\hro$ be an $n$-mode quantum state ($n\geq 8$) without reflection symmetry. 
Fix $\epsilon\in(0,0.245)$, $K\leq 0.22/\epsilon$, and $\kappa\geq 1/0.99$. 
Any EA scheme, performing joint measurements on up to $K$ input states, $\hro^{\otimes K}$, succeeding in the learning task with $2/3$ success probability must use
\begin{equation}
N=\Omega  (K^{-1}\epsilon^{-2} (1+0.99\kappa)^{n/2}).
\end{equation}
\end{theorem}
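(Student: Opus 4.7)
The plan is to adapt the hypothesis-testing argument of Thm.~\ref{thm::scaling-single-copy} to the setting in which Bob is denied reflected states but permitted joint POVMs on blocks of up to $K$ input copies. The test family is again the three-peak family $\hat{\rho}_{s\gamma}$, which is generically \emph{not} reflection-symmetric. Alice samples $s\in\{\pm1\}$ uniformly and $\gamma\sim\mathcal{N}(0,\sigma_\gamma^2 I_{2n})$ with $\sigma_\gamma^2=0.99\kappa/2$, and with equal prior prepares $N$ copies of either $\hat{\rho}_0$ or $\hat{\rho}_{s\gamma}$---no reflected states ever appear. Bob groups his $N$ copies into $N/K$ blocks, applying an adaptive tree of joint $K$-copy POVMs, and outputs an estimate of $\chi_{\hat{\rho}}(\pm\gamma)$ once Alice reveals $\gamma$; as in Thm.~\ref{thm::scaling-single-copy}, an $\epsilon$-accurate estimate discriminates the two hypotheses with bias bounded away from zero, so a $2/3$ success probability forces $\|P_0-\mathbb{E}_{s,\gamma}P_{s\gamma}\|_{\text{TV}}=\Omega(1)$.

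To upper bound this TVD, I would apply the tree-based chain-rule bound for adaptive quantum testing (cf.~Refs.~\cite{chen2022quantum, chen2022exponential}), which reduces the task to a per-node second-moment bound on $\Tr[M(\hat{\rho}_{s\gamma}^{\otimes K}-\hat{\rho}_0^{\otimes K})]$ averaged over $s,\gamma$. Writing $\hat{\rho}_{s\gamma}-\hat{\rho}_0=2is\epsilon_0\hat{G}(\gamma)$, where $\hat{G}(\gamma)$ is the traceless anti-Hermitian operator with characteristic function $f(\alpha)=e^{-|\alpha|^2/(2\Sigma^2)}e^{-|\gamma|^2/(2\Sigma^2)}(e^{-|\gamma-\alpha|^2/(2\sigma^2)}-e^{-|\gamma+\alpha|^2/(2\sigma^2)})$, I would expand
\begin{equation}
\hat{\rho}_{s\gamma}^{\otimes K}-\hat{\rho}_0^{\otimes K}=\sum_{\emptyset\neq S\subseteq[K]}(2is\epsilon_0)^{|S|}\bigotimes_{i=1}^{K}B^S_i,
\end{equation}
with $B^S_i=\hat{G}(\gamma)$ for $i\in S$ and $B^S_i=\hat{\rho}_0$ otherwise. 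Averaging over $s$ kills odd-$|S|$ terms, and the surviving leading $|S|=2$ piece, after Gaussian integration in $\gamma$ (essentially the same integral that appears in the proof of Thm.~\ref{thm::scaling-single-copy}), yields an exponential suppression factor; combined with the square-root step intrinsic to converting a second-moment bound into a TVD bound, this produces the advertised $(1+0.99\kappa)^{-n/2}$ per-copy decay. Summing over the $N/K$ nodes and inverting the requirement $\text{TVD}=\Omega(1)$ then delivers $N=\Omega(K^{-1}\epsilon^{-2}(1+0.99\kappa)^{n/2})$.

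The main technical obstacle is controlling the higher-order terms in the $2^K$-subset expansion, whose combinatorial prefactors $\binom{K}{|S|}$ grow polynomially in $K$ while their suppression scales only as $(\epsilon_0 K)^{|S|}$; the hypothesis $K\leq 0.22/\epsilon$ is precisely what keeps $\epsilon_0 K$ below the convergence threshold, ensuring the $|S|=2$ piece dominates and the perturbation series is absolutely summable. A secondary CV-specific subtlety is that $(\hat{\rho}_0^{\otimes K})^{-1}$ is unbounded on the infinite-dimensional Fock space, so the naive operator-level quantum chi-squared is not directly usable; instead one must work with POVM-element ratios (or suitable regularizations) as in Thm.~\ref{thm::scaling-single-copy}. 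Once these two points are handled cleanly, the remaining bookkeeping essentially recycles the Gaussian integrals already executed in that earlier proof.
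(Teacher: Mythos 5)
Your high-level route is the same as the paper's: restrict to the three-peak family $\hat{\rho}_{s\gamma}$ (not reflection-symmetric for generic $\gamma$), run the revealed-$\gamma$ hypothesis test against $\hat{\rho}_0$ with the learner limited to adaptive joint POVMs on $K$-copy blocks of un-reflected states, expand $\hat{\rho}_{s\gamma}^{\otimes K}-\hat{\rho}_0^{\otimes K}$ over subsets, kill the odd-$|S|$ terms by averaging over $s$, and use $K\lesssim 1/\epsilon$ to make the $|S|=2$ term dominate the series. You have also correctly identified why $K\le 0.22/\epsilon$ is the natural form of hypothesis (it keeps $K\epsilon_0$ below the convergence threshold of the subset expansion).

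The gap is in the one step that actually produces the claimed bound: the derivation of the base $(1+0.99\kappa)^{n/2}$ together with the prefactor $K^{-1}\epsilon^{-2}$. You attribute the halved exponent to ``the square-root step intrinsic to converting a second-moment bound into a TVD bound,'' but this is arithmetically inconsistent with your own bookkeeping: if the square root is applied to the full per-node second moment $\sim K^{2}\epsilon_0^{2}\,\E_\gamma[\cdots]$, it also halves the powers of $K$ and $\epsilon_0$, and summing over the $N/K$ nodes then yields $N=\Omega(\epsilon^{-1}(1+0.99\kappa)^{n/2})$ with no $K^{-1}$ --- not the stated bound; if it is applied only to the Gaussian factor, you owe a justification for why Cauchy--Schwarz or Pinsker acts on only part of the expression. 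Moreover, the same square-root conversion is available in the proof of Thm.~\ref{thm::scaling-single-copy}, yet the exponent there is $n$, so the halving must originate in the $\gamma$-average of the surviving $|S|=2$ term itself --- which you dismiss as ``essentially the same integral'' as in Thm.~\ref{thm::scaling-single-copy}. It cannot be: with the same $\sigma_\gamma^2=0.99\kappa/2$, that integral evaluates to $(1+0.99\kappa)^{-n}$, not $(1+0.99\kappa)^{-n/2}$. Carrying out the $\gamma$-integral for the genuinely new object $\E_\gamma\Tr[M(\hat G(\gamma)\otimes\hat G(\gamma)\otimes\hat{\rho}_0^{\otimes (K-2)})]$ (and its permutations), and verifying that it is only $(1+0.99\kappa)^{-n/2}$-small, is the heart of the theorem and is missing from the proposal. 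Finally, the hypothesis $\kappa\ge 1/0.99$ is never invoked anywhere in your argument; a complete proof must locate where it enters.
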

Similarly, the proof proceeds via a hypothesis-testing task (see SM Sec.~S8~\cite{supple}) in which the learner employs a general EA adaptive scheme, without access to reflected states, to distinguish between two hypotheses. 
This result confirms that the exponential advantage stems from the combination of entangled measurements and reflected states; neither resource alone suffices.

\begin{figure}[t] 
  \centering
  \includegraphics[width=0.37\textwidth]{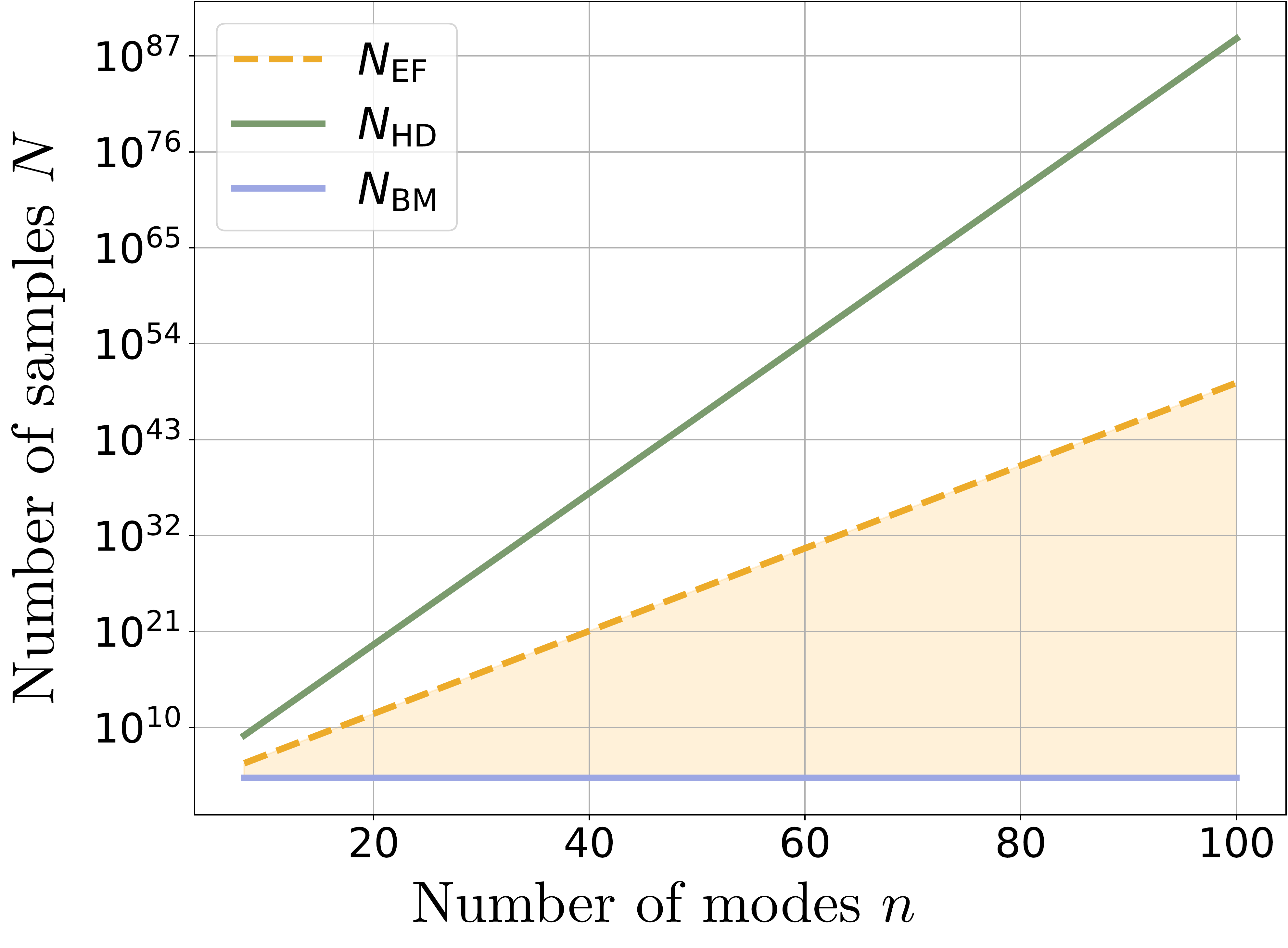} 
  \caption{Sample complexity for the BM, HD~(upper bounds), and EF~(lower bound). The orange-shaded area represents the region of advantage of the BM over any EF scheme. Here, $\epsilon=0.09$, $\delta=1/3$, and $\kappa=2$.}
  \label{fig::bound-comparision}
\end{figure}

\textit{Heterodyne detection~(HD).{\textemdash}}
To provide a concrete EF upper bound, we analyze a heterodyne detection scheme~\cite{serafini2017quantum, ferraro2005gaussian}, implemented via homodyne measurements of orthogonal quadratures after 50:50 beam splitters.
The POVM elements are given by $\{\hat{\Pi}_\zeta\}_{\zeta\in\mathbb{C}^n}$, where $\hat{\Pi}_\zeta\equiv |\zeta\rangle\langle\zeta|/\pi^n$ with $|\zeta\rangle$ representing coherent states.
The characteristic function of the state can be written as the inverse Fourier transform of the outcome distribution $p(\zeta)$~(see Sec.~S3 in SM~\cite{supple}) as
\begin{align}
    \chi_{\hro}(\alpha)=e^{\frac{|\alpha|^2}{2}}\int d^{2n}\zeta~p(\zeta)e^{\zeta^\dagger\alpha-\alpha^\dagger\zeta}.
\end{align}
Given $N$ outcomes~$\{\zeta^{(i)}\}_{i=1}^N$, $\tilde{\chi}_{\hro}(\alpha)\equiv\frac{1}{N}\sum_{i=1}^N e^{\frac{|\alpha|^2}{2}}e^{(\zeta^{(i)})^\dagger\alpha-\alpha^\dagger(\zeta^{(i)})}$ is an unbiased estimator for the characteristic function. The number of samples required to achieve $\epsilon$-accuracy with probability $1-\delta$ is
\begin{align}\label{eq::heterodyne-bound}
    N_{\text{HD}}=O(\epsilon^{-2} e^{|\alpha|^2}\log(1/\delta)).
\end{align}
Restricting queries to $|\alpha|^2 \leq \kappa n$ yields
$N_{\text{HD}} = O(\epsilon^{-2} e^{\kappa n}\log(1/\delta))$.
As with the general EF lower bounds, the HD protocol also exhibits exponential scaling (with a different base), suggesting either that tighter EF lower bounds may exist or that more efficient EF schemes than HD are possible.
Figure~\ref{fig::bound-comparision} illustrates that, for reasonable values of $\epsilon$, $\delta$, and $\kappa$, the sample complexity gap is enormous, e.g., $10^{19}$ ($10^{39}$) for $n=50$ ($100$) modes. 
Additionally, the EF lower bound and the EA BM scheme differ by a factor of $10^{21}$ ($10^{45}$) samples for $n=50$ ($n=100$) modes.

\textit{Classicality–complexity tradeoff.{\textemdash}}
The general EF results capture the exponential cost for learning {\it arbitrary} states.
One naturally wonders whether specific classes of states avoid this prohibitive scaling.
We therefore analyze the connection between learning cost and the state's degree of (non)classicality.
Among various indicators capturing nonclassicality~\cite{kenfack2004negativity, albarelli2018resource,takagi2018convex, tan2020negativity}, we focus on the \emph{nonclassical depth} $\tau \equiv \tfrac{1-s_{\max}}{2}$~\cite{lee1991measure}, where $s_{\max}$ is the largest $s$ for which the $s$-ordered QPD $W_{\hro}(s,\beta)$ is a valid probability distribution~\cite{barnett2002methods}.
This regularity ensures the Gaussian tail bound $|\chi_{\hro}(\alpha)| \leq e^{- s_{\text{max}}\frac{|\alpha|^2}{2}}$ (see SM Sec.~S10~\cite{supple}).
For $s_{\max}>0$, this decay defines a cutoff radius $|\alpha|^2 \leq L_{\epsilon}(s_{\text{max}})\equiv \tfrac{2}{s_{\text{max}}}\log(1/\epsilon)$ beyond which the characteristic function magnitude is negligible, effectively bounding the required sample complexity.

\begin{theorem}[Classicality-aware upper bound]\label{thm:heterodyne-classicality}
Fix $\epsilon,\delta\in(0,1)$, $\kappa>0$, and $S\in (0,1]$.
Let $\hro$ be an $n$–mode state with classicality $s_\text{max}\in [S,1]$.
Then, there exists an EF protocol that solves the learning task using
\begin{align}   
N=
\begin{cases}
O(\epsilon^{-2}e^{\kappa n}\log(1/\delta)), & \text{if }\kappa n\leq L_{\epsilon}(S),\\
O(\epsilon^{-2\left(1+\frac{1}{S}\right)}\log(1/\delta)), & \text{if }\kappa n\geq L_{\epsilon}(S).
\end{cases}
\end{align}
\end{theorem}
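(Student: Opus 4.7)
The plan is to combine the heterodyne-detection (HD) estimator already established in the main text with a truncation rule that exploits the classicality tail bound $|\chi_{\hro}(\alpha)|\leq e^{-s_{\max}|\alpha|^2/2}$ recalled in the paragraph preceding the theorem. My proposed EF protocol collects a query-independent batch of $N$ HD samples once; on receiving a query $\alpha$ with $|\alpha|^2\leq\kappa n$, it outputs $\tilde u=0$ whenever $|\alpha|^2$ exceeds a classicality-aware threshold, and otherwise reports the HD estimator $\tilde\chi_{\hro}(\alpha)$. The sample count $N$ is chosen separately in the two regimes of the theorem so that, whichever branch fires, the accuracy guarantee of Def.~\ref{def::learning-task} is met.

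First I would fix the truncation radius at $L_\epsilon(S)=(2/S)\log(1/\epsilon)$: the tail bound and $s_{\max}\geq S$ then give $|\chi_{\hro}(\alpha)|\leq e^{-SL_\epsilon(S)/2}=\epsilon$ whenever $|\alpha|^2>L_\epsilon(S)$, so returning $0$ on the far branch is deterministically $\epsilon$-accurate and consumes no samples. For the near branch I would reuse the HD analysis behind Eq.~\eqref{eq::heterodyne-bound}: each summand of the estimator has modulus $e^{|\alpha|^2/2}$, so a Hoeffding inequality applied to the real and imaginary parts yields the required accuracy for a fixed query using $N=O(\epsilon^{-2}e^{|\alpha|^2}\log(1/\delta))$ samples.

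Splitting on the two regimes then completes the argument. When $\kappa n\leq L_\epsilon(S)$, no admissible query triggers the truncation branch, so the HD sample count must accommodate the worst admissible query $|\alpha|^2=\kappa n$, yielding the first bound $O(\epsilon^{-2}e^{\kappa n}\log(1/\delta))$. When $\kappa n\geq L_\epsilon(S)$, the HD branch only needs to succeed up to $|\alpha|^2=L_\epsilon(S)$; substituting $e^{L_\epsilon(S)}=\epsilon^{-2/S}$ turns the HD count into $O(\epsilon^{-2(1+1/S)}\log(1/\delta))$, which is the second bound.

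I expect the only genuine obstacle to be conceptual rather than technical: one has to notice that classicality does not tighten the variance of the HD estimator itself, but instead lets one bypass HD entirely for large $\alpha$, thereby replacing $\kappa n$ in the exponent by the smaller cutoff $L_\epsilon(S)$. Once this observation is in place, the proof reduces to splicing two off-the-shelf ingredients --- the deterministic tail bound for $s$-classical states and the per-query HD Hoeffding bound --- and the stated complexities follow by plugging in the worst-case admissible $|\alpha|^2$ in each regime.
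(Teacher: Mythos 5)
Your proposal is correct and follows essentially the same route as the paper's own proof: perform heterodyne detection, use the standard unbiased HD estimator with Hoeffding's inequality for queries inside the classicality cutoff, and output $\tilde u=0$ outside it, where the tail bound $|\chi_{\hro}(\alpha)|\leq e^{-s_{\max}|\alpha|^2/2}\leq\epsilon$ makes zero deterministically $\epsilon$-accurate. The case split on $\kappa n$ versus $L_\epsilon(S)$ and the substitution $e^{L_\epsilon(S)}=\epsilon^{-2/S}$ are exactly how the paper obtains the two branches of the bound.
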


\begin{proof}[Proof idea.]
Perform HD on each copy.
Given $s_{\text{max}}$, use the standard unbiased HD estimator for $|\alpha|^2 \leq L_{\epsilon}(s_{\text{max}})$ (as in the general case).
For $|\alpha|^2 \geq L_{\epsilon}(s_{\text{max}})$, the tail bound ensures $|\chi_{\hro}(\alpha)|\leq \epsilon$; output $\tilde{u}=0$, which is $\epsilon$‑accurate by design.
A full proof and additional implications are provided in SM Sec.~S10~\cite{supple}.
\end{proof}
Note that the bound for classical states~($s_{\text{max}}=1$)~\cite{serafini2017quantum}, statistical mixtures of coherent states, becomes $N=O(\epsilon^{-4}\log(1/\delta))$, matching the EA BM scaling but achieved here without entanglement or reflected states.
Hence, a classicality–complexity tradeoff emerges: in the more classical regime the HD scheme suffices, whereas in the highly nonclassical regime, the sample complexity can be reduced only by providing entanglement and reflected states. 
We complement the heterodyne upper bound with a classicality‑aware lower bound.

\begin{theorem}[Classicality-aware lower bound]\label{thm::scaling-single-copy2-classicality}
Let $\hro$ be an $n$-mode quantum state ($n\geq 8$) with classicality $s_{\text{max}}\in[S,1)$, where $S\in(0,1)$.
Fix $\epsilon\in(0,0.245)$ and let $\kappa$ lie in an admissible domain.
Any EF learning scheme succeeding in the learning task with $2/3$ success probability must use
\begin{equation}    
N=\Omega 
( \epsilon^{-2}  
\exp\left[-2\kappa' nf(S)\right] 
\left(1+0.99\kappa'\right)^n
 ),
\end{equation}
where 
$f(S) \equiv \frac{S}{1+\sqrt{1-S^2}}$, and $\kappa' =\kappa'(\kappa, n,S,\epsilon)$ (see Sec.~S10~\cite{supple}).
\end{theorem}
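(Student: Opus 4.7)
The plan is to adapt the hypothesis-testing argument of Theorem~\ref{thm::scaling-single-copy} to a three-peak family whose nonclassical depth is constrained by the parameter $S$. The key new ingredient is that imposing $s_{\max}\ge S$ forces a restriction on the admissible parameter range of the three-peak construction, which propagates through to an explicit classicality-dependent correction in the lower bound.

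First, I would redesign the three-peak state family. The original construction in Thm.~\ref{thm::scaling-single-copy} uses a thermal-like envelope governed by $\nu\in(0,1)$ with $\sigma^2=(1/\nu-\nu)/2$ and $\Sigma^2=(1+\nu)/(1-\nu)$; as $\nu\to 1$, the envelope $e^{-|\alpha|^2/(2\Sigma^2)}$ decays arbitrarily slowly, so the state is arbitrarily close to the P-function-singular regime (low $s_{\max}$). To enforce $s_{\max}\ge S$, the $s$-ordered characteristic function $e^{S|\alpha|^2/2}\chi_{\hro_{s\gamma}}(\alpha)$ must remain the Fourier transform of a valid probability distribution. Writing the three-peak state in operator form as a thermal kernel sandwiching a displacement-operator combination, this reduces to a condition on $\nu$ of the form $\nu\ge\nu_*(S)$ for some explicit threshold. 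The quantity $f(S)=S/(1+\sqrt{1-S^2})$ naturally arises as the closed form of $\nu_*(S)$ via the quadratic relation $S(1+\nu^2)=2\nu$ obtained from non-negativity of the Gaussian envelope at ordering $s=S$.

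Next, I would rerun the hypothesis-testing game of Thm.~\ref{thm::scaling-single-copy}, but with the classicality-constrained family. Alice samples $s\in\{\pm 1\}$ uniformly and $\gamma\in\mathbb{C}^n$ from a complex Gaussian with variance $\sigma_\gamma^2=0.99\kappa'/2$, where $\kappa'=\kappa'(\kappa,n,S,\epsilon)$ is defined so that (i) the query constraint $|\gamma|^2\le\kappa n$ still holds with high probability, and (ii) the three-peak construction is well-defined at the classicality-constrained value $\nu_*(S)$. The admissible domain for $\kappa$ is then precisely the range for which $\kappa'>0$ after this substitution. As in Thm.~\ref{thm::scaling-single-copy}, if Bob's EF scheme meets the learning guarantee, he succeeds in distinguishing the two hypotheses with high probability, so the TVD between the measurement output distributions under the two hypotheses must be $\Omega(1)$.

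Finally, I would upper-bound the per-copy TVD increment. The computation parallels Thm.~\ref{thm::scaling-single-copy} with two modifications: the thermal parameter is pinned to $\nu_*(S)$, and the Gaussian average over $\gamma$ now involves an additional factor of $e^{-S|\gamma|^2}$ inherited from the tail bound $|\chi_{\hro}(\gamma)|\le e^{-S|\gamma|^2/2}$ on both copies. Carrying out the Gaussian integral over $\gamma$ with variance $\sigma_\gamma^2$ yields a per-copy bound of order $\epsilon^2\exp[-2\kappa' n f(S)](1+0.99\kappa')^{-n}$; inverting this to force an $\Omega(1)$ total TVD gives the claimed $N=\Omega(\epsilon^{-2}\exp[-2\kappa' nf(S)](1+0.99\kappa')^n)$. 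The main technical obstacle is step one, namely the explicit identification of $\nu_*(S)$ (and hence $f(S)$) as the sharp boundary where the modified three-peak state simultaneously remains a valid quantum state and satisfies the classicality constraint; the rest is a bookkeeping adaptation of the Thm.~\ref{thm::scaling-single-copy} TVD calculation with the classicality-induced Gaussian weight absorbed into the $\gamma$-integral.
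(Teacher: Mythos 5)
Your high-level strategy is the right one and matches the paper's: rerun the revealed-hypothesis-testing game of Thm.~\ref{thm::scaling-single-copy} on a three-peak subfamily that is additionally promised to satisfy $s_{\max}\ge S$, and track how the classicality constraint degrades the bound. The algebraic observation that $f(S)=S/(1+\sqrt{1-S^2})$ is the root in $(0,1)$ of $S(1+\nu^2)=2\nu$ is also correct. However, the quantitative core of your argument does not produce the stated bound. You claim the factor $\exp[-2\kappa' n f(S)]$ emerges from inserting a weight $e^{-S|\gamma|^2}$ (from the tail bound, ``on both copies'') into the Gaussian average over $\gamma$. But averaging $e^{-c|\gamma|^2}$ against the prior with per-quadrature variance $\sigma_\gamma^2=0.99\kappa'/2$ gives $(1+0.99c\kappa')^{-n}$, i.e.\ it modifies the \emph{base} of the exponential, not a separate multiplicative factor of the form $e^{-2\kappa' n f(S)}$; moreover $2f(S)\neq S$ in general. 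The structure of the theorem's bound (a clean $e^{-2f(S)\cdot\kappa' n}$ alongside an unchanged $(1+0.99\kappa')^n$) indicates the $f(S)$ factor arises deterministically from a constraint on the perturbation amplitude $\epsilon_0$ (or the effective signal strength) as a function of $|\gamma|^2$ forced by the classicality promise, evaluated at $|\gamma|^2\approx\kappa' n$ --- not from reweighting the $\gamma$-integral. Your bookkeeping, carried out as described, would yield a different formula.

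Relatedly, your treatment of $\kappa'$ and the ``admissible domain'' misses the essential tension that generates them. You define $\kappa'$ only via the query-range constraint and well-definedness of the family, with no $\epsilon$-dependence; but the theorem explicitly has $\kappa'=\kappa'(\kappa,n,S,\epsilon)$ and the paper warns the admissible domain may be empty. This is because the classicality constraint caps the strength of the imaginary side peaks (the source of nonclassicality) in a way that decays with $|\gamma|^2$, while the hypothesis test is only a valid certificate of learning if the two hypotheses differ by $\gtrsim\epsilon$ at $\alpha=\pm\gamma$; reconciling these two requirements is what bounds $|\gamma|^2$ in terms of $\log(1/\epsilon)$ and $S$ and can make the construction infeasible. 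Finally, your identification of the threshold $\nu_*(S)$ from ``non-negativity of the Gaussian envelope at ordering $s=S$'' is asserted rather than derived: the envelope of the three-peak state has inverse variance $\frac{1}{\Sigma^2}+\frac{1}{\sigma^2}=\frac{1+\nu^2}{1-\nu^2}$, whose positivity at ordering $S$ is automatic for all $S\le 1$, so the binding condition must instead come from controlling the oscillatory side-peak contributions to the $S$-ordered quasiprobability, jointly in $\nu$, $\epsilon_0$, and $|\gamma|$. Until you carry out that positivity analysis and the resulting optimization, the appearance of $f(S)$ in your bound is reverse-engineered from the answer rather than proved.
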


\begin{figure}[t]
\centering
\includegraphics[width=0.38\textwidth]{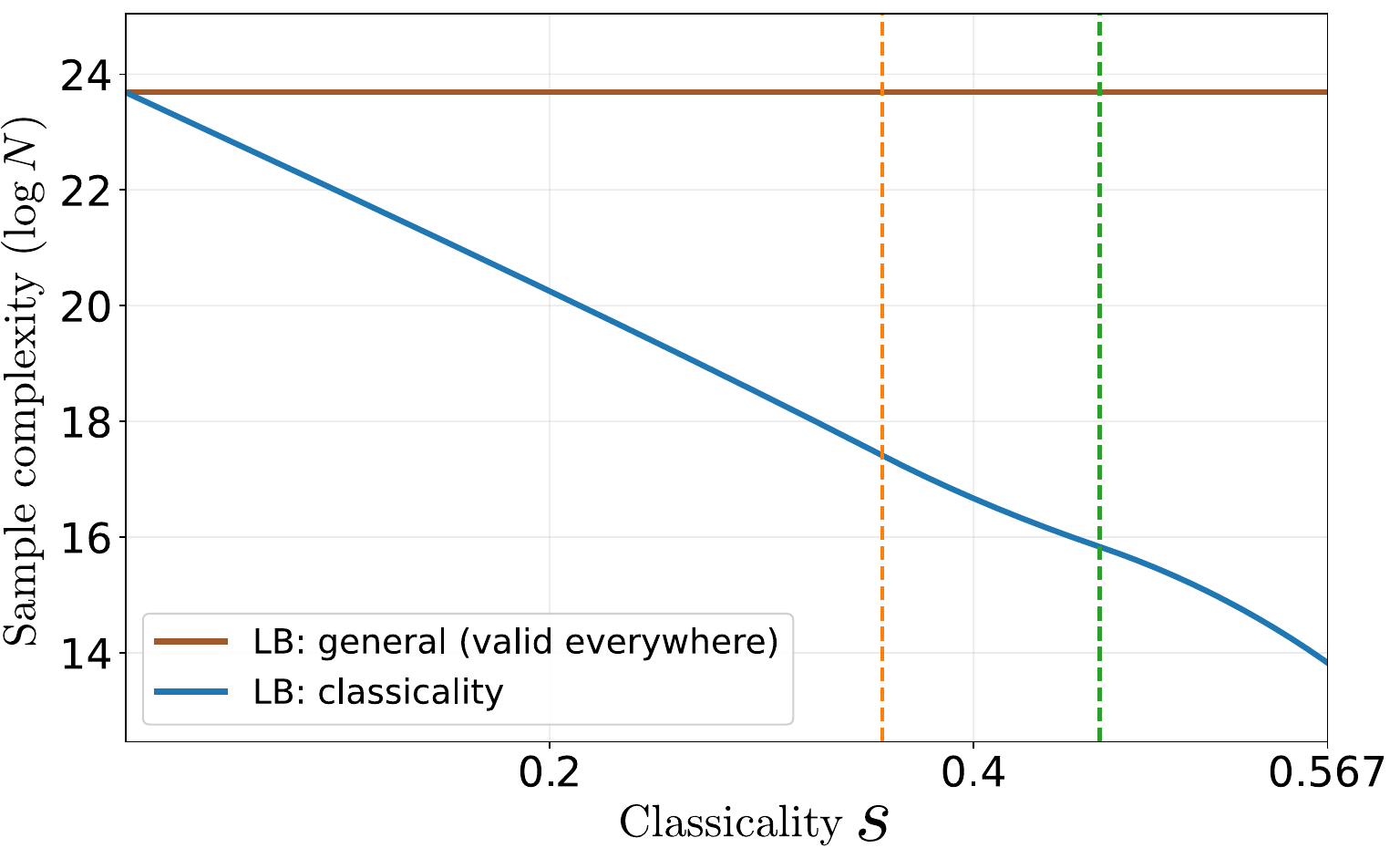}
\caption{
Sample complexity lower bound as a function of the input‑state's minimum classicality, $S$.
Parameters: $n=10$, $\epsilon=10^{-3}$, $\kappa=1.7$.}
\label{fig::phase-transition}
\end{figure}

We use a similar revealed‑hypothesis testing task as in the proof of Thm.~\ref{thm::scaling-single-copy}, with the additional promise that, if Alice selects the second hypothesis, all states are at least $S$‑classical.
Note that the admissible domain of $\kappa$ is determined by $S,n,$ and $\epsilon$ and may be empty~(see SM Sec.~S10~\cite{supple}).
Since $f(S)$ increases monotonically with $S$ and $\kappa'$ is chosen as in Sec.~S10~\cite{supple}, one can show that the effective base $(1+0.99\kappa')e^{-2\kappa' f(S)}$ monotonically decreases in $S$, and hence the lower bound decreases as the input classicality increases; this behavior is illustrated in Fig.~\ref{fig::phase-transition}.
Thus, known positive classicality ($s_{\text{max}}>0$) allows for lower sample complexity (Thms.~\ref{thm:heterodyne-classicality} and \ref{thm::scaling-single-copy2-classicality}).
By contrast, for non-positive classicality ($s_{\text{max}}\leq 0$), refining complexity via $s_{\text{max}}$ remains an open question; hence, we revert to the general bounds of Thm.~\ref{thm::scaling-single-copy} and Eq.~\eqref{eq::heterodyne-bound}.
Taken together with our EF lower bounds, these results show that in the genuinely nonclassical regime the \textit{only} way to reduce sample complexity while maintaining learning guarantees is to provide both entanglement and reflected states (Thm.~\ref{thm:BM}).

\textit{Conclusion.{\textemdash}}
We established a rigorous exponential quantum advantage in learning CV quantum states that jointly relies on reflected states and entangled measurements, thereby identifying these two resources as precisely what is needed to overcome the exponential sample cost in generic bosonic state learning.
Since learning high-dimensional CV quantum states is a key primitive in many protocols, we anticipate that the BM scheme can be employed in various applications, such as quantum state certification~\cite{aolita2015reliable, chabaud2019building, eisert2020quantum,wu2021efficient,da2011practical}, Wigner function reconstruction~\cite{weinbub2018recent,buvzek1996reconstruction}, and shadow estimation~\cite{huang2020predicting,chen2021robust,becker2024classical,gandhari2024precision}.
Two natural directions for future work are tightening the EF lower bounds to close the gap to heterodyne-based upper bounds, and analyzing how noise and experimental imperfections affect the separation in practice, to guide experimental demonstrations of the exponential quantum advantage.

\bigskip

\begin{acknowledgements}
We thank Robbie King for interesting and fruitful discussions.
This research was supported by the National Research Foundation of Korea Grants (No. RS-2024-00431768 and No. RS-2025-00515456) funded by the Korean government (Ministry of Science and ICT~(MSIT)) and the Institute of Information \& Communications Technology Planning \& Evaluation (IITP) Grants funded by the Korea government (MSIT) (No. IITP-2025-RS-2025-02283189 and IITP-2025-RS-2025-02263264).
\end{acknowledgements}

\bibliography{reference.bib}

\end{document}